\renewenvironment{proof}{{\noindent \bfseries Proof:}}{}
\newtheorem{theorem}{Theorem}[section]
\newtheorem{lemma}[theorem]{Lemma}
\newtheorem{proposition}[theorem]{Proposition}
\newtheorem{definition}[theorem]{Definition}
\newtheorem{remark}[theorem]{Remark}
\renewcommand{\qed}{\hfill \ensuremath{\Box}}
\newcommand\mc[1]{\mathcal{#1}}
\newcommand\BL{\mc{B}(L)}
\newcommand\cH{\mc{H}}
\newcommand\PH{\mc{P}(\cH)}
\newcommand\BH{\mc{L}(\cH)}
\newcommand\VH{\mc{V}(\cH)}
\newcommand\PV{\mc{P}(V)}
\newcommand\bbC{\mathbb{C}}
\newcommand\join{\vee}
\newcommand\cN{\mc N}
\newcommand\ra{\rightarrow}
\newcommand\bbR{\mathbb{R}}
\begin{document}

\title[Reconstructing an atomic orthomodular lattice]{Reconstructing an atomic orthomodular lattice from the poset of its Boolean sublattices}
\author{Carmen Constantin}
\address{Carmen Constantin\\
Quantum Group\\
Department of Computer Science\\
University of Oxford}
\email{i.m.carmen@gmail.com}
\author{Andreas D\"oring}
\address{Andreas D\"oring\\
Clarendon Laboratory\\
Department of Physics\\
University of Oxford}
\email{doering@atm.ox.ac.uk}
\date{5. December 2013}

\begin{abstract} We show that an atomic orthomodular lattice $L$ can be reconstructed up to isomorphism from the poset $\BL$ of Boolean subalgebras of $L$. A motivation comes from quantum theory and the so-called topos approach, where one considers the poset of Boolean sublattices of $L=\PH$, the projection lattice of the algebra $\BH$ of bounded operators on Hilbert space.
\end{abstract}

\maketitle

\vspace{0.5cm}

\section{Introduction} Orthomodular lattices  play a prominent role in quantum logic \cite{DCG02,Var07}. Often, existence of atoms is required for conceptual reasons. The prototypical example is $\PH$, the lattice of projections on a Hilbert space $\cH$, which is an atomic, complete orthomodular lattice.

In recent years, a new form of logic for quantum systems has been developed in the so-called topos approach to quantum theory \cite{DI08a,DI08b,DI08c,DI08d,DI11}. This is based on presheaves over the set of abelian von Neumann subalgebras of a von Neumann algebra $\cN$. For clarity and simplicity, consider the example $\cN=\BH$, the algebra of bounded linear operators on the Hilbert space $\cH$. An abelian von Neumann subalgebra $V\subset\BH$ has a lattice of projections $\PV$ that is a complete Boolean algebra. 

In the topos approach, one considers the set $\mc V(\BH)$ of all abelian von Neumann subalgebras of $\BH$ and partially orders this set under inclusion. Correspondingly, we have the set $\mc B(\PH)$ of complete Boolean sublattices of $\PH$, the projection lattice of $\BH$. The abelian subalgebras $V\in\mc V(\BH)$ and their corresponding complete Boolean sublattices $\PV\in\mc B(\PH)$ are called \emph{contexts}. Conceptually, they can be thought of as `classical perspectives' on the quantum system. 

The topos-based form of logic for quantum systems uses some presheaf constructions over the poset $\mc V(\BH)$ that are beyond the scope of this paper (see \cite{DI08b,Doe12}). Here, we focus on a simpler question: can the orthomodular lattice $\PH$, which is traditionally used in quantum logic, be reconstructed from the poset of contexts $\mc B(\PH)$ that underlies the constructions in the topos approach?

The answer is affirmative, and in fact, we will show that for any atomic orthomodular lattice $L$, one can reconstruct (up to isomorphism) the set of elements, the order relation and the orthocomplementation of $L$ from the poset $\mc B(L)$ of Boolean subalgebras of $L$. Conceptually, this means that by considering the partially ordered set of contexts, one does not lose information compared to considering the whole orthomodular lattice. This also implies that the new form of presheaf- and topos-based form of logic for quantum systems is (at least) as rich as traditional quantum logic.

\emph{Acknowledgements.} We thank John Harding, Rui Soares Barbosa, Nadish de Silva, Dan Marsden and Andrei Constantin for feedback.

\section{Relation to earlier results} Apart from the connections to quantum logic and the topos approach alluded to in the introduction, the main result of the paper (Thm. \ref{Thm_MainResult} below) can more directly be seen as the `object counterpart' to the result of \cite{HarNav11} (used in \cite{HarDoe10,Doe12e}) that every order-automorphism of the poset $\mc B(L)$ of an orthomodular lattice $L$ corresponds to a unique automorphism of $L$ and vice versa. Here, we consider objects (lattices), not morphisms. 

\begin{remark}
If $L=\PH$, the projection lattice on a Hilbert space $\cH$, then the height of the poset $\VH$ of abelian subalgebras equals the dimension of $\cH$ (if we include the trivial subalgebra $V_0=\bbC\hat 1$ in $\VH$; otherwise, the dimension of $\cH$ equals the height of $\VH$ plus $1$). The dimension of $\cH$ determines the Hilbert space $\cH$ up to isomorphism, and hence determines $\PH$ up to isomorphism. This is a cheap (and rather indirect) way of `reconstructing' $\PH$ from $\VH$.
\end{remark}
In this paper, we give a more explicit and more generally applicable (re)construction.

\section{Grouping and splitting} Let $L$ be an atomic orthomodular lattice with $0$ and $1$, and let $\BL$ be the set of Boolean subalgebras (BSAs) of $L$, partially ordered under inclusion. Two elements $P$ and $Q$ of $L$ are orthogonal if $P\leq Q^{\perp}$, where $Q^{\perp}$ denotes the orthocomplement of $Q$, and $Q\leq P^{\perp}$. Orthogonality also implies that the meet of $P$ and $Q$ is equal to $0$. It is clear that every element $P$ of $L$ is contained in at least one Boolean subalgebra $V$ of $L$ (for example in $V_P=\{0,P,P^{\perp},1\}$).

Let $\mc F=\{P_1,P_2,\ldots,P_n\ldots\}$ be a (possibly infinite) family of pairwise orthogonal elements in $L$ with join $1$. Then $\mc F$ generates an atomistic BSA $V\subseteq L$. The elements in $\mc F$ are the atoms of the $V$ and since each element of $V$ is a join of elements in $\mc F$, $V$ is an atomistic BSA. 

We say that a BSA $V$ generated by a family $\mc F$ as above has \textbf{dimension} $n=\#\mc F$, the cardinality of $\mc F$. In general, not every BSA $V$ of an atomic orthomodular lattice $L$ is generated by a family $\mc F$ of pairwise orthogonal elements,\footnote{An example is $\PH$, the projection lattice on an infinite-dimensional Hilbert space, which has complete Boolean sublattices that have no atoms at all, e.g. the projection lattice of the abelian von Neumann algebra generated by the position operator. There also are Boolean sublattices of $\PH$ that have some atoms, but are not generated by them.} but each element of $L$ is contained in some BSA generated by a family $\mc F$ of pairwise orthogonal elements. 

From now on, we will only consider those BSAs in $\BL$ which are generated by families of pairwise orthogonal elements. This allows us to describe inclusion relations within $\BL$ in terms of so-called grouping and splitting actions. We will write $\mc F_V$ for the family generating a BSA $V$.

\begin{definition}
If $\mc F$ and $\mc G$ are two families of pairwise orthogonal elements with join $1$, we say that $\mc G$ is obtained by \textbf{grouping} the elements in $\mc F$ if any $Q\in\mc G$ can be written as a join of elements in $\mc F$. Let $S_Q$ denote the set of elements in $\mc F$ that have join $Q$. The fact that the elements in $\mathcal{G}$ are pairwise orthogonal implies that the sets $S_Q,\;Q\in\mc G,$ are pairwise disjoint. If $\mc G$ is obtained by grouping the elements in $\mc F$, we say that $\mc F$ is obtained by \textbf{splitting} the elements in $\mc G$.
\end{definition}

The BSAs contained in a BSA $V$ are obtained from grouping the elements in $\mc F_V$ while the algebras which contain $V$, if they exist, are obtained from $V$ by splitting the elements in $\mc F_V$. 

A $2$-dimensional BSA $V\subseteq L$ is generated by two complementary elements. We can find out from the order relations within $\BL$ when one (or both) of these elements are atoms. This result will be useful later in our reconstruction of the lattice $L$. 

\begin{lemma}\label{3.2}
Given an atomistic ortholattice $L$ and a $2$-dimensional BSA $V$ of $L$, we have three possible scenarios:
\begin{itemize}
\item[i)] if $V$ is maximal in $\BL$ then its generating elements are complementary atoms. 

\item[ii)] if $V$ is included in a $3$-dimensional BSA which is maximal in $\BL$ $W$ then $V$ is generated by an atom of $L$ together with its complement which is a join of two atoms in $L$. Moreover, $W$ contains precisely two other $2$-dimensional BSAs, apart from $V$ itself.

\item[iii)]if $V$ is neither maximal, nor included in a maximal BSA, then $V$ contains an atom of $L$ if and only if all $4$-dimensional BSAs $W\subseteq L$ which contain $V$ also contain precisely three $3$-dimensional BSAs, $V_1$, $V_2$ and $V_3$, such that $V\subset V_i$, $i\in\{1,2,3\}$. 

\end{itemize}
\end{lemma}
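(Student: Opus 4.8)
The plan is to analyze each case by reasoning about the structure of Boolean subalgebras that a 2-dimensional BSA can sit inside, translating statements about atoms into statements about grouping and splitting. Throughout I would use the key fact established earlier: the BSAs containing $V$ are exactly those obtained by \emph{splitting} the generators of $V$, and a BSA is atomistic with atoms being the pairwise orthogonal generators. So a $2$-dimensional BSA $V=\{0,P,P^\perp,1\}$ can be enlarged precisely when at least one of $P$, $P^\perp$ can be split into strictly smaller pairwise orthogonal pieces, which happens exactly when that element fails to be an atom.

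Let me think about the cases. For part i), I would argue the contrapositive: if $V$ is maximal, then neither $P$ nor $P^\perp$ can be split, so both are atoms; conversely splitting is impossible precisely when both generators are atoms. The main observation is that $V$ maximal means no proper splitting of $\{P,P^\perp\}$ exists, forcing each generator to be an atom of $L$. For part ii), I would observe that a maximal $3$-dimensional BSA $W$ containing $V$ arises by splitting exactly one of $P$, $P^\perp$ into two orthogonal atoms (splitting both would give dimension $\geq 4$, and maximality of $W$ at dimension $3$ forces all three generators of $W$ to be atoms). Say $P^\perp = A\vee B$ with $A,B$ atoms; then $P$ must already be an atom. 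The three $2$-dimensional sub-BSAs of $W$ correspond to the three ways of grouping three atoms $\{P,A,B\}$ into two blocks, namely $\{P, A\vee B\}$, $\{A, P\vee B\}$, and $\{B, P\vee A\}$; one of these is $V$, leaving precisely two others, as claimed.

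For part iii), which I expect to be the main obstacle, I would set up the biconditional by counting the $3$-dimensional BSAs strictly between $V$ and a $4$-dimensional $W\supseteq V$. A $4$-dimensional $W$ has four atoms as generators, and its $3$-dimensional sub-BSAs correspond to groupings of these four atoms into three blocks, i.e.\ to partitions of a $4$-element set into three nonempty blocks; there are $\binom{4}{2}=6$ such partitions (choose the doubleton), hence six $3$-dimensional BSAs inside $W$ in total. The question is how many of these contain $V=\{0,P,P^\perp,1\}$, equivalently how many groupings into three blocks refine to the two-block partition $\{S_P, S_{P^\perp}\}$ of the four atoms determined by $V$. If $P$ is an atom, then one block of this two-block partition is a singleton and the other a triple, so a refining three-block partition must split the triple into a pair and a singleton: there are exactly $\binom{3}{2}=3$ ways, giving precisely three intermediate $3$-dimensional BSAs. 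If neither $P$ nor $P^\perp$ is an atom, then since $W$ is $4$-dimensional both blocks are pairs, and a refining three-block partition must split one of the two pairs while keeping the other intact, giving exactly $2$ possibilities, not three.

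The delicate point in iii) is handling quantifiers and the degenerate splittings correctly. I must ensure the family of $4$-dimensional $W$ containing $V$ is nonempty under the hypothesis that $V$ is neither maximal nor inside a maximal BSA, and I must verify that the two-block partition induced on the atoms of any such $W$ really has the combinatorial shape I claimed, namely that $P$ being an atom of $L$ forces $S_P$ to be a singleton among the atoms of $W$. The hypothesis rules out cases i) and ii), so whenever $V$ sits inside a BSA at all, it sits inside arbitrarily large ones and in particular inside $4$-dimensional ones; and any atom of $L$ lying below $P$ and in $W$ must equal $P$ when $P$ is itself an atom. Assembling the count ``three intermediate $3$-dimensional BSAs in every $4$-dimensional $W\supseteq V$'' $\iff$ ``$V$ contains an atom'' then follows from the dichotomy $3$ versus $2$ established above, which completes the proof. \qed
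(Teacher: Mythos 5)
Your parts i) and ii) follow the paper's argument essentially verbatim: maximality means no generator can be split, atomicity of $L$ then forces the generators of a maximal BSA to be atoms, and a maximal $3$-dimensional $W$ is generated by three orthogonal atoms whose three two-block groupings give exactly the three $2$-dimensional sub-BSAs. Part iii) is also framed the same way as in the paper (counting which groupings of the four generators of $W$ refine the two-block partition $\{S_P, S_{P^\perp}\}$ induced by $V$), and your forward direction is correct: an atom $P$ of $L$ cannot be written as a join of two or more orthogonal nonzero elements, so $S_P$ is a singleton in \emph{every} $4$-dimensional $W\supseteq V$, and the $\binom{3}{2}=3$ count goes through.

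The backward direction, however, contains a step that fails. You assert that if neither $P$ nor $P^\perp$ is an atom of $L$, then ``since $W$ is $4$-dimensional both blocks are pairs.'' This does not follow: a generator of $W$ need not be an atom of $L$, so a $4$-dimensional $W\supseteq V$ can perfectly well have generating family $\{P, Q_2, Q_3, Q_4\}$ with $P^\perp=Q_2\vee Q_3\vee Q_4$, inducing the partition shape $\{1,3\}$ and hence three intermediate $3$-dimensional BSAs even though $P$ is not an atom of $L$. The implication you verified is ``$P$ atom $\Rightarrow S_P$ singleton''; the converse, which your ``both blocks are pairs'' claim silently uses, is false. Fortunately the condition in the lemma is universally quantified over all $4$-dimensional $W\supseteq V$, so to negate it you only need to exhibit \emph{one} $W$ with a count different from three: since neither $P$ nor $P^\perp$ is an atom, each can be written (by atomicity and orthomodularity) as a join of two orthogonal nonzero elements, $P=Q_1\vee Q_2$ and $P^\perp=Q_3\vee Q_4$, and the resulting $W$ generated by $\{Q_1,Q_2,Q_3,Q_4\}$ has exactly two intermediate $3$-dimensional BSAs containing $V$. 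This existential construction is precisely what the paper's proof does; replacing your universal claim with it closes the gap.
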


\begin{proof}
For the first two statements, it is sufficient to observe that a BSA is maximal in $\BL$ if neither of its generating elements can be split. Since $L$ is atomistic, this implies that the generating elements of a maximal BSA must be atoms of $L$. This proves the first statement. 

For the second statement note that, $W$ being maximal, must be generated by three pairwise orthogonal atoms, call them $P$, $Q$ and $R$ which add up to the identity. The only BSAs included in $W$ are those generated either by $\{P,Q\vee R\}$ or $\{Q,P\vee R\}$ or $\{R,P\vee Q\}$, so $V$ must also be generated by one of these three families. 

For the third statement, let $\mc F_V=\{P, P^{\perp}\}$ denote the generating family of the $2$-dimensional BSA $V$. If $P$ is an atom of $L$ then any $4$-dimensional algebra $W$ which contains $V$ is obtained by splitting $P^{\perp}$ into three elements, since $P$ is an atom and cannot be split. Hence, $W$ has generating family $\mc F_W=\{P,Q_2,Q_3,Q_4\}$. There are precisely three sub-BSAs of $W$ which contain $V$. These are given by
\begin{align*}
			\mc F_{V_1} &= \{P,Q_2\vee Q_3,Q_4\},\\
			\mc F_{V_2} &= \{P,Q_2\vee Q_4,Q_3\},\\
			\mc F_{V_3} &= \{P,Q_3\vee Q_4,Q_2\}.
\end{align*}
Note that there are three other ways of grouping the elements in $W$ to obtain a $3$-dimensional BSA. The resulting $3$-dimensional BSAs $V_i$, $i=4,5,6$ do not contain $V$, since it is not possible to obtain the element $P_1$ by grouping the elements generating these other algebras.

On the other hand, consider a $2$-dimensional BSA given by $\mc F_{\tilde{V}}=\{Q, Q^{\perp}\}$ generated by two orthogonal elements which are not atoms. Since $Q$ is not an atom, it is possible to write it as a join of two orthogonal non-zero elements (in $L$), that is $Q=Q_1\join Q_2$. Similarly, it is possible to express $Q^{\perp}$ as the join of some orthogonal $Q_3$ and $Q_4$. The BSA $\mc F_{\tilde{W}}=\{Q_1,Q_2,Q_3,Q_4\}$ is a $4$-dimensional algebra which includes $\tilde{V}$, but only two of its sub-BSAs also contain $\tilde{V}$, namely
\begin{align*}
			\mc F_{\tilde{V}_1}=\{Q_1\vee Q_2,Q_3,Q_4\},\\
			\mc F_{\tilde{V}_2}=\{Q_1,Q_2,Q_3\vee Q_4\}.
\end{align*}
\qed

\end{proof}

\section{Spiked BSAs} 

Note that a family of pairwise orthogonal atoms of $L$ with join $1$ generates a \textbf{mBSA} (maximal Boolean subalgebra) of $L$.

\begin{definition}
A \textbf{sub-mBSA of $L$} is a BSA of $L$ generated by a family $\mc F$ of pairwise orthogonal elements with join $1$ with the property that only one element in $\mc F$ is the join of two atoms in $L$, while all others are atoms in $L$. 
\end{definition}

\begin{definition}
An algebra is \textbf{spiked} if it is either a mBSA, or is generated by a family $\mc F$ of pairwise orthogonal elements with join $1$ which contains precisely one non-atom of the lattice $L$ (we call this the \textbf{leading} element), while all other elements of $\mc F$ are atoms of $L$. If an algebra is spiked, we will say that its family $\mc F$ of generating elements is also spiked.
\end{definition}

\begin{definition}
Given a BSA $V$ which is not a mBSA, we say that a BSA $W$ is a \textbf{successor} of $V$ if $V\subsetneq W$ and there is no BSA $W'$ such that $V\subsetneq W'\subsetneq W$. We call a successor of a successor of a BSA $V$, if it exists, a double successor of $V$.
\end{definition}

Note that in terms of generating elements, if $W$ is a successor of $V$ then the family of elements generating $W$ is obtained from the family of elements generating $V$ by splitting precisely one element into two pairwise orthogonal elements. 

Since $L$ is atomic and orthomodular, such a splitting is possible whenever $V$ is not a mBSA (i.e. when its generating family contains at least one non-atom of $L$). This is because any non-atomic element $P$ of an atomic  lattice must be larger than some atom $Q$ and the orthomodularity condition then allows us to write $P$ as the join of $Q$ and $Q^{\perp}\wedge P$, which are easily seen to be pairwise orthogonal. Moreover, note that any element of $L$ can be written as a join of pairwise orthogonal atoms of $L$.

\begin{proposition}
Let $V\in\BL$ be a BSA generated by the (possibly infinite) family of elements $\mc F_V=\{P_1,P_2,\ldots,P_k,\ldots\}$. If we assume that $V$ is neither a mBSA, nor a sub-mBSA, then $V$ is spiked if and only if all double successors of $V$ contain precisely three successors of $V$. 
\end{proposition}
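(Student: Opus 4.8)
The plan is to translate everything into the language of splitting actions on generating families and then count. Since $V$ is not a mBSA, its generating family $\mc F_V$ contains at least one non-atom, and (granting that $V$ is not a mBSA) the condition that $V$ be spiked is exactly the condition that $\mc F_V$ contain \emph{precisely one} non-atom, all other elements being atoms. I would first record this reformulation, together with the two facts already available in the excerpt: that a successor of $V$ is obtained by splitting precisely one element of $\mc F_V$ into two orthogonal pieces, and that in an atomic orthomodular lattice any non-atom can indeed be split (via orthomodularity) into two orthogonal nonzero pieces.

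Next I would classify the double successors. A double successor $D$ is obtained from $V$ by two successive splits, and the second split either acts on one of the two pieces created by the first, or on a different element of $\mc F_V$ left untouched by the first. Hence $D$ arises in exactly one of two ways: (Case A) two distinct elements $P_i,P_j\in\mc F_V$ are each split into two orthogonal pieces, or (Case B) a single element $P\in\mc F_V$ is split into three pairwise orthogonal nonzero pieces. The heart of the argument is to count the successors $W$ of $V$ with $V\subsetneq W\subsetneq D$; these correspond precisely to those groupings of $\mc F_D$ that recover $\mc F_V$ by a single split. In Case A there are exactly two, obtained by undoing either one of the two splits; in Case B there are exactly three, corresponding to the three ways of grouping the three pieces $X,Y,Z$ of $P$ into two blocks, namely $\{X\vee Y,Z\}$, $\{X\vee Z,Y\}$ and $\{X,Y\vee Z\}$. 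I expect the main bookkeeping obstacle to be checking that in Case B these three groupings yield genuine, pairwise distinct successors of $V$ (each block being a nonzero element orthogonal to its complement, with the right join), and that no further successor of $V$ lies below $D$.

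With the count established, both implications follow. If $V$ is spiked but not a sub-mBSA, then $\mc F_V$ has a unique non-atom $P$ which, \emph{not} being a join of exactly two atoms, can be split into three orthogonal nonzero pieces: peeling off two atoms $a_1,a_2\le P$ via orthomodularity gives $P=a_1\vee a_2\vee(a_1^\perp\wedge a_2^\perp\wedge P)$ with the last term nonzero, precisely because $a_1^\perp\wedge P$ is not itself an atom. As $P$ is then the only splittable element of $\mc F_V$, every double successor of $V$ must be of Case B and hence contains exactly three successors of $V$. Conversely, if $V$ is not spiked then $\mc F_V$ contains at least two non-atoms; splitting each of them once yields a Case A double successor containing only two successors of $V$, so not all double successors contain three. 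This proves the equivalence, and it exhibits the sub-mBSA hypothesis as exactly the condition ensuring that the unique non-atom splits into three pieces, so that Case B genuinely occurs and the count of three is realised.
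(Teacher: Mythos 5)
Your proof is correct and takes essentially the same approach as the paper, which simply declares the result ``completely analogous'' to the $3$-versus-$2$ counting in the proof of part (iii) of Lemma \ref{3.2}; your Case A/Case B classification of double successors is exactly that counting carried out in full generality. The only substantive addition is that you make explicit why the sub-mBSA exclusion guarantees the leading element splits into three nonzero pieces, a point the paper leaves implicit.
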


\begin{proof}
Completely analogous to the proof of the third statement of Lemma \ref{3.2}.\qed
\end{proof}

This result is important because it shows that the order structure of $\BL$ allows us to decide whether a given BSA $V\in\BL$ is spiked or not. Since every spiked BSA has a distinguished leading element, one can guess that we want to somehow link the elements of the lattice $L$ to the spiked BSAs of $\BL$ using the information encoded within the order structure of $BL$, which is what we will do in the following section.

Let $V$ be a $2$-dimensional BSA generated by $\mc F_V=\{P,P^{\perp}\}$, and let $\mc S_V$ be the set of spiked abelian BSAs which contain $V$. The generating family $\mc F$ of an element $\tilde V$ of $\mc S_V$ is obtained either by completely splitting $P$ into pairwise orthogonal atoms and splitting $P^{\perp}$ into a spiked family of elements, or by completely splitting $P^{\perp}$ into pairwise orthogonal atoms and splitting $P$ into a spiked family of elements.

The set $\mc S_V$ of spiked abelian BSAs which contain $V$ is partially ordered under inclusion.
\begin{itemize}
	\item [(a)] If $V$ is not spiked, the generating family $\mc F$ of a \emph{minimal} element in $\mc S_V$ with respect to this partial order is obtained by either taking $P$ as the leading element and splitting $P^{\perp}$ into atoms, or by taking $P^{\perp}$ as leading projection and splitting $P^{\perp}$ into atoms. Let $\mc M_V$ denote the set of minimal elements in $\mc S_V$. We call $\mc M_V$ the set of minimal spiked sup-BSAs of $V$ in $\BL$. 
	\item [(b)] If $V$ is spiked, the minimal element of $\mc S_V$ which contains $V$ will of course be $V$ itself. Hence for a spiked $2$-dimensional BSA $V$ we establish by convention the set $\mc M_V$ to be the set of all mBSAs which contain $V$, as these algebras correspond to keeping the atom fixed and completely splitting the co-atom, together with $V$ itself which corresponds to keeping the co-atom fixed.
		\item[(c)] if $V$ is spiked and submaximal, we again define $\mc M_V$ to be the set of all mBSAs which contain $V$ together with $V$ itself.   
	\item [(d)] If $V$ is spiked and maximal then it is generated by a pair of orthocomplementary atoms. These two atoms are  not comparable to any other elements in the lattice $L$ except for the top and bottom elements. The set $\mc M_V$ contains only one element, namely $V$ itself.  

\end{itemize}

\section{Reconstructing $L$ from $\BL$} Every $2$-dimensional BSA $V$ with $\mc F_V=\{P,P^{\perp}\}$ is generated by two complementary elements. Hence there is an obvious two-to-one mapping from $L$ to the $2$-dimensional elements of $\BL$, which of course are the atoms of the poset $\BL$. Therefore, in order to generate all the elements of the atomic orthomodular lattice $L$ from the poset $\BL$, we need to assign two elements (corresponding to the two elements $P,P^{\perp}$) to each $2$-dimensional BSA $V$ with $\mc F_V=\{P,P^{\perp}\}$. 

The minimal spiked sup-BSAs of a given $2$-dimensional BSA $V$ make good candidates for this assignment. On the one hand, they can be characterised using only information derived from the poset structure of $\BL$, on the other hand, a minimal spiked sup-BSA of $V$ can be identified with its leading element, which is one of the two generating elements of $V$. Yet, this would give us a many-to-one mapping in general, since there are many (e.g. in $\BH$ continuously many) minimal spiked sup-BSAs of $V$ with the same leading element, corresponding to the many possible ways of splitting its complement. Therefore, it will make sense to define two equivalence classes of algebras within $\mc M_V$ consisting of those algebras whose generating families of elements have the same leading element. 

In the non-degenerate cases (a-c) above, our task is to identify these two equivalence classes using the information encoded within the order structure of $\BL$. By partitioning the sets $\mc M_V$ into two equivalence classes, we are in effect identifying all pairs of complementary elements of the lattice $L$. Later we will see how the order relations between non-complementary elements can be replicated using the corresponding equivalence classes.

Of course, in the degenerate case (d) when $V$ is also maximal, we already know that $V$ is generated by two orthocomplementary atoms, and since these are not comparable to any other elements of $L$, the set $\mc M_V$ does not need any further analysis. 

For a spiked $2$-dimensional BSA $V$ with $\mc F_V=\{P,P^{\perp}\}$, where $P$ is an atom, it is easy to establish what the two equivalence classes should be. One of them, call it $\mathcal{R}_V$, ought to contain the mBSAs which contain $V$ -- this corresponds to keeping the atom $P$ as the `leading' element and completely splitting its complement $P^{\perp}$ into atoms (this is a slight abuse of terminology, since there is no leading element in a mBSA). The other equivalence class, call it $\mathcal{S}_V$, ought to contain only $V$ itself -- this corresponds to keeping the co-atom $P^{\perp}$ as the leading element.

Similarly, for a $2$-dimensional BSA $W$ generated by an element $P$ that is the join of two atoms, together with its complement $P^{\perp}$, we define one equivalence class to contain all the sub-mBSAs in $\mathcal{M}_W$ and the other one to contain all the $3$-dimensional BSAs in $\mathcal{M}_W$.

For non-spiked $2$-dimensional BSA whose (minimal) generating elements are joins of $3$ or more atoms, the two equivalence classes can be determined by considering the inclusion relations between elements belonging to different sets of minimal spiked sup-BSAs, as we will show now.
\begin{lemma}
If $V$ is a non-spiked $2$-dimensional BSA whose generating elements are joins of $3$ or more atoms, and if $A,B\in \mc M_V$, then $A$ and $B$ have the same leading element if and only if there exists some non-spiked $2$-dimensional $W\neq V$ and $C,D\in \mathcal{M}_W$ such that $A\subseteq C$ and $B\subseteq D$.
\end{lemma}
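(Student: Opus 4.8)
The plan is to reduce everything to the \emph{leading elements} of the algebras involved. By case~(a) of the preceding discussion, every $A\in\mc M_V$ is spiked with leading element equal to one of the two generators $P,P^{\perp}$ of $V$, and the two equivalence classes we wish to recover are precisely the set of $A\in\mc M_V$ whose leading element is $P$ and the analogous class for $P^{\perp}$. The engine of the whole argument is a single observation about containments between spiked algebras: if $A\subseteq C$ with $A$ and $C$ both spiked, then the leading element $\ell_C$ of $C$ satisfies $\ell_C\leq\ell_A$. I would prove this by noting that passing from the finer family $\mc F_C$ to the coarser $\mc F_A$ is a grouping; each atom of $A$ is a join of a group of elements of $\mc F_C$ and, being an atom, must equal a single atom of $C$, so the non-atom $\ell_C$ can only lie in the group whose join is $\ell_A$, whence $\ell_C\leq\ell_A$.

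For the reverse implication I would argue by contradiction. Suppose there is a non-spiked $2$-dimensional $W$ with $\mc F_W=\{R,R^{\perp}\}$, $W\neq V$, and $C,D\in\mc M_W$ with $A\subseteq C$, $B\subseteq D$, but that $A,B$ have \emph{different} leading elements, say $\ell_A=P$ and $\ell_B=P^{\perp}$. Applying case~(a) to the non-spiked algebra $W$ gives $\ell_C,\ell_D\in\{R,R^{\perp}\}$, and the observation above gives $\ell_C\leq P$ and $\ell_D\leq P^{\perp}$. If $\ell_C=\ell_D$, then this common non-atom lies below $P\wedge P^{\perp}=0$, which is absurd. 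Otherwise $\{\ell_C,\ell_D\}=\{R,R^{\perp}\}$, so after relabelling $R\leq P$ and $R^{\perp}\leq P^{\perp}$; the latter forces $P\leq R$, hence $R=P$ and $W=V$, contradicting $W\neq V$. Thus $\ell_A=\ell_B$. This is the step that uses the hypothesis $W\neq V$ in an essential way.

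For the forward implication I would construct a witness explicitly. Assume $A,B$ share the leading element $P$, the other case being symmetric, so $\mc F_A=\{P\}\cup\alpha$ and $\mc F_B=\{P\}\cup\beta$ with $\alpha,\beta$ families of pairwise orthogonal atoms with join $P^{\perp}$. Since $P$ is a join of three or more atoms, I can choose two of them and let $R$ be their join; then $R$ is a non-atom with $R<P$, while $R^{\perp}\geq P^{\perp}$ is also a non-atom, so $W$ with $\mc F_W=\{R,R^{\perp}\}$ is non-spiked and $W\neq V$. Orthomodularity in the Boolean algebra generated by $R$ and $P$ yields $R^{\perp}=(P\wedge R^{\perp})\vee P^{\perp}$; fixing a family $\delta$ of pairwise orthogonal atoms with join $P\wedge R^{\perp}$, the families $\alpha\cup\delta$ and $\beta\cup\delta$ are families of pairwise orthogonal atoms with join $R^{\perp}$. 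Hence $C=\{R\}\cup(\alpha\cup\delta)$ and $D=\{R\}\cup(\beta\cup\delta)$ keep $R$ as leading element and split $R^{\perp}$ fully into atoms, so by case~(a) they lie in $\mc M_W$; grouping $R$ together with $\delta$ recovers $P=R\vee(P\wedge R^{\perp})$, which shows $A\subseteq C$ and $B\subseteq D$.

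The main obstacle, and the point I would take most care over, is the containment observation between spiked algebras together with the precise role of the two hypotheses. The condition $W\neq V$ is exactly what rules out the degenerate witness $W=V$, which would otherwise link the two different leading-element classes and falsify the lemma; and the assumption that the generators of $V$ are joins of three or more atoms is exactly what lets me produce a \emph{proper} non-atom $R<P$ in the forward construction, since a join of only two atoms admits no non-atom strictly below it. I would also double-check that $\mc M_W$ consists solely of the algebras with leading element $R$ or $R^{\perp}$ by re-running the minimality argument of case~(a) for the non-spiked algebra $W$, so that the deduction $\ell_C,\ell_D\in\{R,R^{\perp}\}$ used in the reverse direction is fully justified.
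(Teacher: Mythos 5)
Your proof is correct and follows essentially the same route as the paper's: the same key observation that an inclusion $A\subseteq C$ of spiked algebras forces $\ell_C\leq\ell_A$, the same contradiction argument exploiting $W\neq V$ for one direction, and an explicit witness construction for the other. The only cosmetic difference is that you split $P$ by taking the join $R$ of two atoms as the new leading element, whereas the paper splits off a single atom $Q$ and keeps the remainder $Z$ as the leading element of the witness algebras; both choices work.
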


\begin{proof} Assume that $\mc F_W=\{Q,Q^{\perp}\}$ and $\mc F_V=\{P,P^{\perp}\}$ and that $A,B\in\mc M_V$ and $C,D\in\mc M_W$ such that $A\subseteq C$ and $B\subseteq D$. If $P_A, P_B, P_C$ and $P_D$ are the respective leading elements  of $A,B,C$ and $D$ (it makes sense to speak about leading elements, since $\mc M_V$ and $\mathcal{M}_W$ do not contain any mBSAs, as neither $V$ nor $W$ are spiked BSAs), the inclusion relations imply that $P_C\leq P_A$ and $P_D\leq P_B$.

Note at this point that the leading element of a minimal spiked sup-BSA of $V$ must be equal to either $P$ or $P^{\perp}$ (hence $P_A,P_B\in\{P,P^{\perp}\}$), while the leading element of a minimal spiked sup-BSA of $W$ must be equal to either $Q$ or $Q^{\perp}$ (hence $P_C,P_D\in\{Q,Q^{\perp}\}$). Assume towards a contradiction that $P_A\neq P_B$. Then $P_A$ and $P_B$ must be complementary elements. But if $P^{\perp}_A=P_B$, then $P_C$ and $P_D$ must also be complementary elements, otherwise the inclusion relations would imply that $P_A\geq P_C$ and $P^{\perp}_A\geq P_D=P_C$, which is imposible. This means that $P^{\perp}_C=P_D$. However, this leads to a contradiction, since 
\[
			P^{\perp}_C=P_D\leq P_B=P^{\perp}_A\Longrightarrow P_C\geq P_A,
\]
but $P_C=P_A$ is not possible since $W\neq V$.  Hence $P_A$ must be equal to $P_B$.

On the other hand, if $A$ and $B$ have the same leading element, then their generating families are of the form $\mc F_A=\{P, R_1,R_2,\ldots\}$ and $\mc F_B=\{P,S_1,S_2,\ldots\}$, and it is possible to write $P$ as the join of two orthogonal elements $Q$ and $Z$, where $Q$ is an atom, and $Z$ is not an atom. The BSAs $C$ and $D$ given by
\[
			\mc F_C:=\{Q,Z,S_1,S_2,\ldots\} \text{ and } \mc F_D:=\{Q,Z,R_1,R_2,\ldots\}
\]
are sup-BSAs of $A$ and $B$, respectively, and they belong to the set of minimal spiked sup-BSAs of the non-spiked $2$-dimensional BSA $W$ with $\mc F_W=\{Q,Q^{\perp}\}$. \qed
\end{proof}

Once the equivalence classes on the sets of minimal spiked sup-BSAs have been established, it is possible to define an order $\preceq$ on them which replicates the order within the lattice of elements.

\begin{definition}\label{Def1}
If $[X]$ and $[Y]$ are two equivalence classes corresponding to non-spiked BSAs, we say that $[X]\preceq [Y]$ if there exists $A\in[X]$ and $B\in [Y]$ such that $A\supseteq B$.
\end{definition}

If $A\supseteq B$ as above and $\mc F_A=\{P,R_1,R_2,\ldots\}$ with all the $R_i$ atoms while $\mc F_B=\{Q,S_1,S_2,\ldots\}$ with all the $S_i$ atoms, then the generating elements of $B$ are obtained by grouping the generating elements in $A$. This implies that the leading element of $B$ (which is the only non-atom) must be equal to a join of generating elements of $A$. This join must include the leading element of $A$ among its terms, as this is the only possible way of grouping the generating elements of $A$ into a spiked family. Hence there is some index set $I$ such that 
\[
			Q=P\vee\bigvee_{i\in I} S_i
\]
and hence $Q\geq P$.

Since all elements of $[X]$ have the same leading element and similarly, all elements of $[Y]$ have the same leading element, the order relation introduced in Definition \ref{Def1} is well defined. 

Moreover, given any two elements $Q$ and $P$ of an atomic orthomodular lattice which are neither atoms nor co-atoms, and which satisfy the order relation $Q\geq P$ within $L$, one has $Q=P\vee (P^{\perp}\wedge Q)$, and we know that $P^{\perp}\wedge Q$ can be expressed as a join of pairwise orthogonal atoms. Hence the equivalence classes corresponding to elements of $L$ which are neither atoms nor co-atoms, will always be related by $\preceq$ whenever their corresponding leading elements are related within the lattice $L$.

We have to use a different approach for defining the order relations which involve the equivalence classes corresponding to atoms and co-atoms of $L$. 

Recall first that for a spiked $2$-dimensional BSA $V$, the set $\mathcal{R}_V$ denotes the the mBSAs which contain $V$, and this is the equivalence class which corresponds to the atom of $V$, while $\mathcal{S}_V$ denotes the one member equivalence class (containing only $V$ itself) which corresponds to the co-atom of $V$.

\begin{definition}\label{Def2}
If $V$ and $W$ are $2$-dimensional BSAs such that $V$ is spiked and $W$ is not spiked, and $[X]\subset\mathcal{M}_W$, then $\mathcal{R}_V\prec [X]$ if there exists $A\in\mathcal{M}_W-[X]$ such that $V\subseteq A$. If this is the case, then also $\mathcal{S}_V\succ \mathcal{M}_W-[X]$. 
\end{definition}

Note that if $V$ with $\mc F_V=\{P,P^{\perp}\}$ is a spiked BSA with $P$ an atom, and $W$ with $\mc F_W=\{Q,Q^{\perp}\}$ is a non-spiked BSA. And if moreover $P<Q$, then $P^{\perp}>Q^{\perp}$ and  there is some way of decomposing $Q$ into a join over a set of atoms which includes $P$. Hence $V$ will be contained in some minimal spiked sup-BSA of $W$ which has $Q^{\perp}$ as its leading element. So $\preceq$ is again well-defined and it captures all the relations between atoms (or co-atoms) and other elements of $L$.

The only relations from $L$ we have not yet captured are those between the atoms and co-atoms themselves. We do this with the following definition.

\begin{definition}\label{Def3}
If both $V$ and $W$ are spiked, their generating elements will either be equal (if $V=W$) or incomparable. Then either $\mathcal{R}_V=\mathcal{R}_W$ and $\mathcal{S}_V=\mathcal{S}_W$, or they are incomparable. 
\end{definition}

\begin{theorem}			\label{Thm_MainResult}
Let $\mc B_2(L)$ denote the set of $2$-dimensional BSAs of an atomistic ortholattice $L$ which are not mBSAs. Let $\mc M_2(L)$ denote the set of $2$-dimensional mBSAs of $L$. The set $$\mc C(L):=\{\mc M_V/_\sim \}_{V\in \mc B_2(L)}\cup \{A_W^1, A_W^2\}_{W\in \mc M_2(L)} \cup \{0,1\}$$ together with the order $\preceq$ defined in $\ref{Def1}$, $\ref{Def2}$ and $\ref{Def3}$ above, and the additional conventions that $0$ and $1$ stand for the top and the bottom elements of $\mc C(L)$, while the $A_W^1$s and $A_V^2$s are pairs of orthocomplementary atoms which are only comparable with $0$ and $1$, is isomorphic to $L$
\end{theorem}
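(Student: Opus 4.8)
The plan is to produce an explicit bijection $\phi\colon L\to\mc C(L)$ and to verify that it carries the order of $L$ onto $\preceq$ in both directions; since an order isomorphism between lattices automatically preserves meets and joins, and since the orthocomplementation is recovered by the pairing of the two classes attached to a single $2$-dimensional BSA, this suffices to identify $L$ with $\mc C(L)$ as an ortholattice. Concretely I would set $\phi(0)=0$ and $\phi(1)=1$, and for $X\in L\setminus\{0,1\}$ let $V_X=\{0,X,X^{\perp},1\}$ be the $2$-dimensional BSA it generates. If $X$ and $X^{\perp}$ are both atoms then $V_X\in\mc M_2(L)$, and I send the pair $\{X,X^{\perp}\}$ to the pair of formal atoms $\{A_{V_X}^1,A_{V_X}^2\}$, fixing one of the two possible correspondences once and for all. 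Otherwise $V_X\in\mc B_2(L)$, and I send $X$ to the class in $\mc M_{V_X}/{\sim}$ whose members carry $X$ as leading element: this is $\mc R_{V_X}$ when $X$ is the atom of a spiked $V_X$, $\mc S_{V_X}$ when $X$ is the non-atom, and the class singled out by the Lemma preceding Definition \ref{Def1} when $V_X$ is non-spiked.

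First I would check that $\phi$ is well defined and surjective, deferring injectivity to the order analysis. Well-definedness is precisely what the preparatory material secures: the Lemma preceding Definition \ref{Def1} shows, for the generic stratum of generators that are joins of three or more atoms, that ``having the same leading element'' is an equivalence relation recoverable from $\BL$ alone and that it splits $\mc M_V$ into exactly two classes, each tagged by a unique leading element lying in $\{X,X^{\perp}\}$; the explicit descriptions in cases (a)--(d), together with the sub-mBSA versus $3$-dimensional dichotomy, cover the remaining low-dimensional and spiked $V$. Surjectivity is immediate, since by construction every member of $\mc C(L)$ is $0$, $1$, a formal atom $A_W^i$, or a class $\mc M_V/{\sim}$, and each of these is the image of the corresponding element of $L$.

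Next I would prove that $\phi$ is order-preserving and order-reflecting, i.e.\ that $X\leq Y$ in $L$ if and only if $\phi(X)\preceq\phi(Y)$. For $P,Q$ that are neither atoms nor co-atoms this is the computation already recorded after Definition \ref{Def1}: an inclusion of representatives $A\supseteq B$, with $A$ having leading element $P$ and $B$ having leading element $Q$, forces $Q$ to be a grouping of the generators of $A$ that contains $P$, whence $Q\geq P$; conversely, orthomodularity supplies $Q=P\vee(P^{\perp}\wedge Q)$ with $P^{\perp}\wedge Q$ a join of atoms, which lets one manufacture representatives witnessing $\phi(P)\preceq\phi(Q)$. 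The relations in which at least one argument is an atom or a co-atom are then supplied by Definitions \ref{Def2} and \ref{Def3}: Definition \ref{Def2} reproduces $P<Q$ (and dually $P^{\perp}>Q^{\perp}$) when one side comes from a spiked pair and the other from a non-spiked one, using that $P<Q$ forces a decomposition of $Q$ over atoms through $P$ so that $V_P$ embeds in a minimal spiked sup-BSA of $V_Q$ with leading element $Q^{\perp}$; Definition \ref{Def3} records incomparability of two distinct pairs of complementary atoms, matching the fact that in $L$ such atoms meet only at $0$ and $1$. Because $\phi$ then carries $\leq$ onto $\preceq$ in both directions, antisymmetry of $\leq$ forces $\phi$ to be injective, the relation $\preceq$ inherits reflexivity, antisymmetry and transitivity, and $\phi$ becomes an order isomorphism; the swap of the two classes attached to each $V$ transports the orthocomplementation, yielding the isomorphism of ortholattices.

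The part demanding the most care, and what I expect to be the main obstacle, is the exhaustive verification that Definitions \ref{Def2} and \ref{Def3} together capture \emph{exactly} the order relations among atoms, co-atoms and the remaining elements: one must confirm completeness, that no relation of $L$ is omitted, and soundness, that no spurious relation is created, across every stratum. The most delicate configuration is a relation between an atom $X$ and a non-complementary element $Y^{\perp}$ with $Y$ itself an atom, so that both $V_X$ and $V_Y$ are spiked and the comparison falls under Definition \ref{Def3} --- for instance $X\leq Y^{\perp}$ when $X$ and $Y$ are orthogonal. Checking that the conventions fixing $\mc R_V$, $\mc S_V$ and the formal atoms $A_W^i$ reproduce precisely these relations, and no others, is the crux of the argument; once this compatibility is in hand, $\phi$ is an order- and orthocomplement-preserving bijection, hence the desired isomorphism $\mc C(L)\cong L$.
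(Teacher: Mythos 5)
Your proposal constructs exactly the same map as the paper ($0\mapsto 0$, $1\mapsto 1$, pairs of complementary atoms to the formal pairs $A_W^1,A_W^2$, and every other $P$ to the equivalence class of spiked BSAs with leading element $P$) and verifies it by appeal to the same preceding lemmas and definitions, so the approach is essentially identical --- indeed your write-up is considerably more detailed than the paper's own three-line proof. The ``delicate configuration'' you flag, namely an orthogonal pair of atoms $X\perp Y$ giving $X\leq Y^{\perp}$ with both $V_X$ and $V_Y$ spiked, which Definition \ref{Def3} appears to declare incomparable, is a genuine point that the paper's proof also leaves unaddressed, so raising it is a strength rather than a divergence from the paper's argument.
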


\vspace{0.7cm}

\begin{proof}

The lattice isomorphism can easily be constructed using the results presented so far. It sends the top and bottom elements of $L$ to the top and bottom elements of $\mc C(L)$. The orthocomplementary atoms of $L$ are identified with the elements of the pairs of the form $\{A_W^1, A_W^2\}$. And for all the other elements $P\in L$, if $\{P,R_1,R_2,\ldots\}$ is a spiked family of elements, we have the assignment
\[
			P\mapsto [\{P,R_1,R_2,\ldots,\}]\in\mathcal{M}_{\{P,P^{\perp}\}}
\]

\qed
\end{proof}

\end{document}